\def\BibTeX{{\rm B\kern-.05em{\sc i\kern-.025em b}\kern-.08em
    T\kern-.1667em\lower.7ex\hbox{E}\kern-.125emX}}
\newcommand{\sint}{\mathop{\mathpalette\dosint\relax}\!\int}
\newcommand{\dosint}[2]{%
  \ifx#1\displaystyle
    \displaysint
  \else
    \normalsint{#1}%
  \fi
}
\newcommand{\displaysint}{\displaystyle\mathsf{s}\mkern-18mu}
\newcommand{\normalsint}[1]{%
  \smallers{#1}\ifx#1\textstyle\mkern-9mu\else\mkern-8.2mu\fi
}
\newcommand{\smallers}[1]{%
  \vcenter{\hbox{$\ifx#1\textstyle\scriptstyle\else\scriptscriptstyle\fi\mathsf{s}$}}%
}
\theoremstyle{definition}
\newtheorem{theorem}{Theorem}
\theoremstyle{definition}
\newtheorem{definition}{Definition}
\theoremstyle{definition}
\theoremstyle{definition}
\theoremstyle{definition}
\newtheorem{lemma}{Lemma}
\theoremstyle{definition}
\begin{document}
\title{A Note on Comparator-Overdrive-Delay Conditioning
for Current\nobreakdash-Mode Control}
\onecolumn

\author{Xiaofan~Cui,~Guanyu~Qian,~and~Al-Thaddeus~Avestruz%
\thanks{1) Department of Electrical and Computer Engineering, University of California, Los Angeles, CA 90095, USA 
(e-mail: \href{mailto:gyqian@ucla.edu}{gyqian@ucla.edu}; \href{mailto:cuixf@seas.ucla.edu}{cuixf@seas.ucla.edu}).}%
\thanks{2) Department of Aerospace Engineering, University of Michigan, Ann Arbor, MI 48109, USA 
(e-mail: \href{mailto:avestruz@umich.edu}{avestruz@umich.edu}).}%
}

\IEEEoverridecommandlockouts
\IEEEpubid{\makebox[\columnwidth]{\hfill} \hspace{\columnsep}\makebox[\columnwidth]{ }}
\maketitle
\IEEEpubidadjcol

\begin{abstract}
\!\!Comparator-overdrive-delay conditioning is a new control conditioning approach for high-frequency current\nobreakdash-mode control.
No existing literature rigorously studies the effect of the comparator overdrive delay on the current\nobreakdash-mode control.
The results in this paper provide insights on the mechanism of comparator-overdrive-delay conditioning.
\end{abstract}

\section{Introduction}
High-frequency current\nobreakdash-mode control is one of the most popular
controller strategies for power converters \cite{Cui2018a,Fernandes2016}.
In high-frequency dc--dc converters, it is not accurate enough to simply model the comparator as an ideal logic device.
In a non-ideal comparator, a finite propagation delay exists between the input and output, and this delay depends on the \emph{input overdrive}—the instantaneous voltage difference between the input terminals.
This phenomenon is known as the \emph{comparator overdrive delay} (COD)~\cite{compoverdelayadi}.
We emphasize that this nonideality can be deliberately leveraged to attenuate interference in the current-sensing signal, thereby improving stability and transient accuracy.

To formalize this observation, Section~II establishes the theoretical framework of comparator-overdrive-delay conditioning in current-mode control loops, followed by Section~III, which quantifies the effect of interference. Section~IV then presents a practical case study that validates the proposed comparator-overdrive-delay conditioning approach.

\section{Theory}
\begin{theorem} \label{theorem:gmaxcontinuity}
Given a constant off\nobreakdash-time current control loop with comparator overdrive delay, if the input is a ramp with slope $m_1$ and interference function $w(t)$, the condition to guarantee the continuous static current mapping is
\begin{align}
  V_{\text{trig}}\tau \ge m_1 K_3\!\left(\frac{|W(\omega)|}{m_1}\right).
\end{align}
where $W(\omega)$ is the Fourier transform of the interference function $w(t)$.
\begin{equation}
W(\omega) \;\triangleq\; \int_{-\infty}^{\infty} w(t)\, e^{-j\omega t}\, dt
\end{equation}

\end{theorem}
\begin{proof}
We start with defining the saturating integral operator
\begin{definition} \label{def:sat_sum}
Given a function $f(x)$ that is continuous on the interval $[a,b]$, we divide the interval into $n$ subintervals of equal width $\Delta x$ and from each interval choose a point, $x_i^\ast $. The \emph{saturating partial sum} of this sequence is
\begin{align}
    \widehat{S}_{n+1} \triangleq u\left( \widehat{S}_{n} + f(x^\ast _i[n+1]) \Delta x \right),
\end{align}
where $u(x)$ is a saturating function where $u(x) = 0$ if $x < 0$ and $u(x) = x$ if $x \ge 0$.
\end{definition}
\begin{definition} \label{def:sat_int}
The \emph{saturating integral operator} is defined as
\begin{align}
    \sint_{a}^{b} f(x) \,dx \triangleq \displaystyle\lim_{n \rightarrow \infty}  \widehat{S}_{n}.
\end{align}
\end{definition}
We consider the following implicit mapping, which defines a function $\mathcal{M} \!: b \! \mapsto \! \theta $
\begin{align}
    \sint_{-\infty}^{\theta} \, \left( x + \frac{1}{2\pi} \int_{-\infty}^{+\infty} 
    | W(w) | e^{j(\omega x + \varphi(\omega))}\,d\,\omega - b \right) \, dx = k,
\end{align}
where $k$ paramaterizes the transconductance, effective capacitor, and threshold voltage as $k \triangleq V_{\text{trig}}C_{\text{eff}}/G $.
We define functions $K_1$ and $K_2$ as
\begin{align}
     & K_1(|W(\omega)|) \triangleq \underset{\substack{b \in \mathcal{R} \\ 0 \le \varphi(\omega) \le 2\pi} }{\text{sup}}\, K_2(b, \varphi(\omega)), \\
     & K_2(b,\varphi(\omega))  \triangleq  \sint_{-\infty}^{\psi_h} \,  \left( x + \frac{1}{2\pi} \int_{-\infty}^{+\infty}
    | W(\omega) | e^{j(\omega x + \varphi(\omega))}\,d\,\omega\;-\;b \right) \, dx ,
\end{align}
where $\psi_l$ and $\psi_h$, as the function of $b$ and $\varphi(\omega)$, are the lowest and largest solution of the equation
\begin{align}
    x + \frac{1}{2\pi} \int_{-\infty}^{+\infty}
    | W(\omega) | e^{j(\omega x + \varphi(\omega))}\,d\,\omega\;-\;b = 0.
\end{align}
It is not very clear if $K_1$ exists or not. We further introduce $K_3$ and $K_4$ to analyze.
We define functions $K_3$ and $K_4$ as
\begin{align}
     & K_3(|W(\omega)|) \triangleq \underset{\substack{- A \le b^{'} \le A \\ 0 \le \varphi^{'}(\omega) \le 2\pi} }{\text{max}}\, K_4(b^{'},\varphi^{'}(\omega)), \\
     & K_4 (b^{'},\varphi^{'}(\omega)) 
     \triangleq \sint_{-\infty}^{\psi^{'}_h} \,  \left( x + \frac{1}{2\pi} \int_{-\infty}^{+\infty}
    | W(\omega) | e^{j(\omega x + \varphi^{'}(\omega))}\,d\,\omega\;-\;b^{'} \right) \, dx,
\end{align}
where $\psi^{'}_l$ and  $\psi^{'}_h$, as the function of $b^{'}$ and $\varphi^{'}(\omega)$, are the lowest and largest solution of the equation 
\begin{align}
    x + \frac{1}{2\pi} \int_{-\infty}^{+\infty}
    | W(\omega) | e^{j(\omega x + \varphi^{'}(\omega))}\,d\,\omega\;-\;b^{'}= 0.
\end{align}
\begin{lemma}
We show that given $b$ and $\varphi(\omega)$, we can always find $b^{'}$ and $\varphi^{'}(\omega)$ such that \mbox{$K_4(b^{'},\varphi^{'}(\omega)) = K_2(b,\varphi(\omega))$}, and vice versa.
\end{lemma}
\begin{proof}
We define
\begin{align}
    \varphi^{'}(\omega) & = \varphi(\omega) + \omega \psi_l, \\
     b^{'} &= b -\psi_l.
\end{align} 
We first show the corresponding $\psi^{'}_h$ and $\psi_h$ as well as $\psi_h$ and $\psi_l$ satisfy the following relationships
\begin{align} 
    \psi^{'}_h & = \psi_h - \psi_l,\\
    \psi^{'}_l & = 0.
\end{align}
The relationship can be verified as
\begin{align}
    \psi_h + \frac{1}{2\pi} \int_{-\infty}^{+\infty}
    | W(\omega) | e^{j(\omega \psi_h + \varphi(\omega))}\,d\,\omega \;-\; b = 0,
\end{align}
\begin{align}
     \psi_h - \psi_l + \frac{1}{2\pi} \int_{-\infty}^{+\infty}
    | W(\omega) | e^{j(\omega \psi_h + \varphi^{'}(\omega) - \omega \psi_l)}\,d\,\omega
    \;-\; (b - \psi_l) = 0.
\end{align}
We validate that $\psi^{'}_h$ is the solution
\begin{align}
    \psi^{'}_h + \frac{1}{2\pi} \int_{-\infty}^{+\infty}
    | W(\omega) | e^{j(\omega \psi^{'}_h + \varphi^{'}(\omega))}\,d\,\omega \;-\; b^{'} = 0.
\end{align}
Similarly, we can validate that zero is the solution
\begin{align}
    K_4(b^{'},\varphi^{'}(\omega)) =& \sint_{0}^{\psi^{'}_h} \left( x + \frac{1}{2\pi} \int_{-\infty}^{+\infty}
    | W(\omega) | e^{j(\omega x + \varphi^{'}(\omega))}\,d\,\omega \;-\; b^{'} \right) dx \nonumber \\
    = &\sint_{0}^{\psi^{'}_h} \Bigg( x + \psi_l  + \frac{1}{2\pi} \int_{-\infty}^{+\infty}
    | W(\omega) | e^{j(\omega x + \varphi(\omega) + \omega \psi_l)}\,d\,\omega \;-\; b \Bigg)\,dx \nonumber \\
     =& \sint_{\psi_l}^{\psi_h} \left( x + \frac{1}{2\pi} \int_{-\infty}^{+\infty}
    | W(\omega) | e^{j(\omega x + \varphi(\omega) )}\,d\,\omega \;-\; b \right) dx \nonumber \\
    = & K_2(b,\varphi(\omega)).
\end{align}
The opposite direction is similar so we omit the proof.
\end{proof}
Therefore, $K_3$, defined as the maximum of $K_4$, should be equal to $K_1$, defined as the maximum of $K_2$,
\begin{align}
    K_1\!\left(\left|W(\omega)\right|\right) = K_3\!\left(\left|W(\omega)\right|\right).
\end{align}
$K_3(|W(\omega)|)$ exists from the Weierstrass theorem, hence $K_1(|W(\omega)|)$ exists.

The following Lemma\,\ref{theorem:contM} provides the 
condition that guarantees the static mapping to be continuous.
\begin{lemma} \label{theorem:contM}
For all $k>K_1(|W(\omega)|)$, function $\theta = \mathcal{M}(b)$ is continuous. 
\end{lemma}
\begin{proof}
From the subadditivity,
\begin{align}
    \bigg|\sint_{a}^{b} f(x) \,dx - \sint_{a}^{b} g(x) \,dx \bigg| \le \sint_{a}^{b} \bigg| f(x) - g(x)\bigg| \,dx.
\end{align}
Then we prove given $b$ which approaches $b_0$, $\theta$ approaches $\theta_0$.
$b$ and $\theta$ satisfies the integral-threshold constraint
\begin{align}  \label{eqn:b_theta_int}
    \sint_{-\infty}^{\theta} \, ( x + w(x) - b) \, dx = k.
\end{align}
$b_0$ and $\theta_0$ satisfies the integral-threshold constraint
\begin{align} \label{eqn:b0_theta0_int}
    \sint_{-\infty}^{\theta_0} \, ( x + w(x) - b_0) \, dx = k.
\end{align}
From (\ref{eqn:b_theta_int}) and (\ref{eqn:b0_theta0_int}),
\begin{align}
      & \sint_{-\infty}^{\theta_0} \, ( x + w(x) - b_0) \, dx - \sint_{-\infty}^{\theta_0} \, ( x + w(x) - b) \, dx \nonumber
\end{align}
\begin{align}
      & =\sint_{-\infty}^{\theta} \, ( x + w(x) - b) \, dx - \sint_{-\infty}^{\theta_0} \, ( x + w(x) - b) \, dx.
\end{align}
Denote the right hand side of equation as RHS. Because the $k>k_{\text{min}}$, $\theta > \theta(b)$ and $\theta_0 > \theta(b_0)$,
\begin{align}
    | \text{RHS} | & = \bigg|\int_{\theta_0}^{\theta} \, ( x + w(x) - b) \, dx \bigg| \;\ge\; \mu_1| \theta - \theta_0|,
\end{align}
where 
\begin{align}
    \mu_1 = \underset{x \in [\theta_0, \theta]}{\text{min}} \left( x + w (x) - b \right).
\end{align}
Denote the right hand side of equation as LHS.
\begin{align}
    | \text{LHS} | & \le \sint_{-\infty}^{\theta_0} \, ( x + w(x) - b_0)  - ( x + w(x) - b) \, dx \;\le\; \mu_2 |b-b_0|,
\end{align}
where
\begin{align}
    \mu_2 = \underset{b \in [b_0, b]}{\text{max}} \left( \theta_0 - \psi_l(b) \right).
\end{align}
Therefore, we have
\begin{align}
    | \theta - \theta_0 |\le \frac{\mu_2}{\mu_1} |b-b_0|.
\end{align}
\end{proof}
Lemma contains the theorem. Hence the proof is done.
\end{proof}

\begin{theorem} \label{theorem:gmax4overdrivedelay}
Given a constant off\nobreakdash-time current control loop with comparator overdrive delay, if the input is a ramp with slope $m_1$ and interference function $w(t)$,
the maximum comparator overdrive delay $t_{od}^{\text{max}}$ is
\begin{align}\label{eqn:tau_constraint}
    t_{od}^{\text{max}} = \frac{A_{ub}}{m_1} + \sqrt{\left(\frac{A_{ub}}{m_1}\right)^2+\frac{2}{m_1}\left(V_{\text{th}}\tau + B \right)},
\end{align}
where
\begin{align}
  \tau = \frac{C_{\text{eff}}}{G}, \quad B =  \bigintssss_{-\infty}^{+\infty} \Bigg| \frac{W(\omega)}{\omega \cdot \pi }  \Bigg| \,d\omega \nonumber.
\end{align}
\end{theorem}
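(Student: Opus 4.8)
The plan is to recast the overdrive mechanism in the saturating-integrator language already set up for Theorem~\ref{theorem:gmaxcontinuity}, to identify the overdrive delay with the length of the active integration window ending at the firing instant, and then to reduce the whole statement to a single quadratic inequality in $t_{od}$.

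First I would fix the model. With ramp input of slope $m_1$ and interference $w(t)=\frac{1}{2\pi}\int_{-\infty}^{+\infty}|W(\omega)|e^{j(\omega t+\varphi(\omega))}\,d\omega$ (zero mean, so that $W(\omega)/\omega$ is integrable and $B$ is finite), the transconductor--capacitor stage integrates $m_1 t+w(t)-b$ and the comparator output changes at the first instant $\theta$ for which $\sint_{-\infty}^{\theta}\left(m_1 t+w(t)-b\right)dt=k$, where $k=V_{\text{th}}C_{\text{eff}}/G=V_{\text{th}}\tau$ is the trip level (the constant called $k$ in the proof of Theorem~\ref{theorem:gmaxcontinuity}, with $V_{\text{th}}$ playing the role of $V_{\text{trig}}$). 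Let $\psi_l$ be the lower endpoint of the active window ending at $\theta$, i.e.\ the last up-crossing of $m_1 t+w(t)$ through $b$ before the output responds, equivalently the last reset of the saturating integrator; then $m_1\psi_l+w(\psi_l)-b=0$, and on $[\psi_l,\theta]$ the running integral stays nonnegative so the saturating integral equals the ordinary one. The comparator overdrive delay is the window length $t_{od}=\theta-\psi_l$, and the firing condition becomes $\int_{\psi_l}^{\theta}\left(m_1 t+w(t)-b\right)dt=V_{\text{th}}\tau$.

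Next I would evaluate this integral in closed form. Substituting $b=m_1\psi_l+w(\psi_l)$ and $\theta=\psi_l+t_{od}$, the term linear in $t$ integrates to $\frac{m_1}{2}t_{od}^{2}$, the constant $-w(\psi_l)$ contributes $-w(\psi_l)t_{od}$, and the rest is $\int_{\psi_l}^{\theta}w(t)\,dt$, so $\frac{m_1}{2}t_{od}^{2}-w(\psi_l)t_{od}+\int_{\psi_l}^{\theta}w(t)\,dt=V_{\text{th}}\tau$. I would then bound the two interference terms: since $|w|\le A_{ub}$ pointwise and $t_{od}>0$ we have $-w(\psi_l)t_{od}\ge-A_{ub}t_{od}$, while integrating the Fourier representation term by term gives $\left|\int_{\psi_l}^{\theta}w(t)\,dt\right|=\left|\frac{1}{2\pi}\int_{-\infty}^{+\infty}\frac{|W(\omega)|}{j\omega}\left(e^{j(\omega\theta+\varphi(\omega))}-e^{j(\omega\psi_l+\varphi(\omega))}\right)d\omega\right|\le\int_{-\infty}^{+\infty}\left|\frac{W(\omega)}{\omega}\right|d\omega=B$. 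Hence $V_{\text{th}}\tau\ge\frac{m_1}{2}t_{od}^{2}-A_{ub}t_{od}-B$, i.e.\ $\frac{m_1}{2}t_{od}^{2}-A_{ub}t_{od}-\left(V_{\text{th}}\tau+B\right)\le0$.

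The left side is an upward parabola in $t_{od}$ whose smaller root is negative, since $V_{\text{th}}\tau+B>0$; together with $t_{od}\ge0$ this forces $t_{od}$ to lie at or below the larger root, which the quadratic formula evaluates to exactly $\frac{A_{ub}}{m_1}+\sqrt{\left(\frac{A_{ub}}{m_1}\right)^2+\frac{2}{m_1}\left(V_{\text{th}}\tau+B\right)}$, the claimed $t_{od}^{\text{max}}$. I expect the crux to be the second step: carrying out the reduction with $\psi_l$ --- rather than the noise-free threshold-crossing time --- as the lower limit, since that choice is exactly what produces the extra $A_{ub}/m_1$ term, and justifying the frequency-domain bound on the running integral of $w$ under a zero-mean hypothesis; after that the quadratic solve is routine.
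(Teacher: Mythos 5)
Your derivation matches the paper's: both reduce the firing condition to the quadratic identity $\tfrac{m_1}{2}t_{od}^{2}-w(t_c)\,t_{od}+\int_{t_c}^{t_{\text{on}}}w(t)\,dt=V_{\text{th}}\tau$ (your $\psi_l,\theta$ are the paper's $t_c,t_{\text{on}}$), bound the interference terms by $A_{ub}$ and $B$, and extract the larger root of the resulting quadratic; your direct quadratic-inequality step is just a slightly cleaner packaging of the paper's explicit root formula followed by monotone substitution of the bound for $w(t_c)$. The only substantive difference is that the paper devotes most of its proof to a sector-boundedness and circle-criterion stability argument (which also produces a lower bound $t_l$ on the delay), but that material is not needed to obtain the stated formula for $t_{od}^{\text{max}}$, so your omission of it does not affect the result.
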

\begin{proof}
We omitted the discrete-time notation $[n]$ for ease of derivation. The current error \mbox{$i_e \triangleq i_c - i_p$} is defined as the difference between the peak current command $i_c$ and the actual peak current $i_p$.

We define the \emph{overdrive delay} as
\begin{equation}
t_{od} \triangleq t_{\text{on}} - t_c.
\end{equation}

Given that the integrator does not saturate for $t > t_c$, the crossing time $t_c$ and the turn-on time $t_{\text{on}}$ satisfy
\begin{equation}
\frac{1}{C_{\text{eff}}}
\int_{t_c}^{t_{\text{on}}}
g\!\left(
i_e + m_1 (t - t_{\text{on}}) + w(t)
\right)\, dt
\;=\;
V_{\text{th}} .
\end{equation}

The transconductance $g$ is bounded from above by $G$. $G$ provides the shortest overdrive delay. Therefore we do the worst-case analysis by substituting $g$ by $G$. Then the worst-case comparator time constant is defined as $\tau_c \triangleq C_{\text{eff}}/G$
\begin{align} 
    \int_{t_c}^{t_{\text{on}}} \left( i_e + m_1 (t - t_{\text{on}}) + w(t) \right)\,dt \;=\; V_{\text{th}}\tau_c.
\end{align}

Given that integrator does not saturate after $t>t_c$, the crossing time $t_c$ and on-time $t_{\text{on}}$ satisfy
\begin{align}
    \frac{1}{C_{\text{eff}}}\int_{t_c}^{t_{\text{on}}} g\left(i_e + m_1 (t-t_{\text{on}}) + w(t) \right)\,dt = V_{\text{th}}.
\end{align}
The transconductance $g$ is bounded from above by $G$. $G$ provides the shortest overdrive delay. Therefore we do the worst-case analysis by substituting $g$ by $G$. Then the worst-case comparator time constant is defined as $\tau_c \triangleq C_{\text{eff}}/G$
\begin{align} \label{eqn:w_vth_g_identity}
    \int_{t_c}^{t_{\text{on}}} \left( i_e + m_1 (t - t_{\text{on}}) + w(t) \right)\,dt = V_{\text{th}}\tau_c.
\end{align}
Equation (\ref{eqn:w_vth_g_identity}) can be equivalently written as 
\begin{align} \label{IT8}
\frac{1}{2}m_1 (t_{\text{on}} - t_c)^2 - w(t_c)(t_{\text{on}} - t_c) + \int_{t_c}^{t_{\text{on}}} w(t)\; dt = V_{\text{th}}\tau_c.
\end{align}
If $W(\omega)/\omega$ is absolute integrable
\begin{align}
    \Big| \bigintssss_{-\infty}^t f(\tau)\,d\tau \Big| & = \frac{1}{2\pi} \Bigg|\bigintssss_{-\infty}^{+\infty} 
    \left(\frac{W(\omega)}{j\omega} + \pi W(0) \delta(\omega) \right) e^{j\omega t}\,d\omega \Bigg| \le \frac{1}{2\pi} \bigintssss_{-\infty}^{+\infty} \bigg | \frac{W(\omega)}{\omega} \bigg | \,d\omega.
\end{align}
We can bound the integral term in (\ref{IT8}) as
\begin{align} \label{eqn:boundf}
     \bigg | \bigintssss_{t_c}^{t_{\text{on}}} w(t)\; dt \bigg| \le 
      \bigg | \bigintssss_{-\infty}^{t_c} w(t)\; dt \bigg| +
      \bigg | \bigintssss_{-\infty}^{t_{\text{on}}} w(t)\; dt \bigg| 
     \le
     \frac{1}{\pi}\bigintssss_{-\infty}^{+\infty} \bigg | \frac{W(\omega)}{\omega}  \bigg| \,d\omega.
\end{align}
From (\ref{IT8}),
\begin{align} \label{eqn:doubletd}
t_{od} = \frac{w(t_c)}{m_1} \pm \sqrt{\left(\frac{w(t_c)}{m_1}\right)^2 + \frac{2}{m_1}\left(k-\int_{t_c}^{t_{\text{on}}} w(t)\; dt \right)}.
\end{align}
By assuming $V_{\text{th}}\tau_c > B$ \footnote{The justification for this assumption is for stability guarantee, which can be found in Theorem\,\ref{theorem:stability1}}, we narrow down two solutions in (\ref{eqn:doubletd}) to one feasible solution
\begin{align}
t_{od} = \frac{w(t_c)}{m_1} + \sqrt{\left(\frac{w(t_c)}{m_1}\right)^2 + \frac{2}{m_1}\left(V_{\text{th}}\tau_c-\int_{t_c}^{t_{\text{on}}} w(t)\; dt \right)}.
\end{align}
From (\ref{eqn:boundf}), the variable delay is bounded from above by $t_u$ and from below by $t_l$
\begin{align}
t_u \triangleq \frac{w(t_c)}{m_1} + \sqrt{\left(\frac{w(t_c)}{m_1}\right)^2+\frac{2}{m_1}\left(V_{\text{th}}\tau_c + B \right)}, \label{eqn:tu}\\
t_l \triangleq \frac{w(t_c)}{m_1} + \sqrt{\left(\frac{w(t_c)}{m_1}\right)^2+\frac{2}{m_1}\left(V_{\text{th}}\tau_c - B \right)} \label{eqn:tl}.
\end{align}

The longest delay can be obtained from (\ref{eqn:tu}) as
\begin{align}
    t^{\text{max}}_{od} = \frac{A}{m_1} + \sqrt{\left(\frac{A}{m_1}\right)^2+\frac{2}{m_1}\left(V_{\text{th}}\tau_c + B \right)}.
\end{align}
\end{proof}

\begin{theorem} \label{theorem:stability1}
Given a constant off\nobreakdash-time current control loop with comparator overdrive delay, if the input is a ramp with slope $m_1$ and interference function $w(t)$,
The condition to guarantee a globally asymptotically stable dynamical mapping is
\begin{align}\label{eqn:vtrigtau_constraint}
  V_{\text{trig}} \tau \ge \frac{4A_{ub}^2}{m_1} + B,
\end{align} 
\end{theorem}

\begin{proof}
\subsection{Sector Boundedness of Nonlinearity $\psi$}
$\psi$ represents the mapping from $\tilde{t}_{\text{on}}$ to $\tilde{i}_e$.
Although $\psi$ is not necessarily differentiable everywhere, we can still prove that $\psi$ is sector bounded by \mbox{$[-2A_{ub}/t_{od}^{\text{min}},+\infty)$} or equivalently in math as
\begin{align}
\label{eqn:sc_bd_psi_opd}
  \left( \psi(\tilde{t}_{\text{{on}}}) + \frac{A_{ub}}{t_{od}^{\text{min}}}\tilde{t}_{\text{{on}}} 
  \right) \tilde{t}_{\text{{on}}} \;>\; 0,
\end{align}
where $t_{od}^{\text{min}}$ is the minimum comparator overdrive delay.
\vspace{+6pt}

\noindent(1)\quad We perturb $i_e$ by $d\,i^{+}_e$, which is a positive infinitesimal. Because of the continuity property of the static mapping, the resulting increase on the on-time $d\,t^{+}_{\text{on}}$ is also a positive infinitesimal. However, the variation on crossing time $\Delta\,t^{+}_c$ might be a positive large number
\begin{align}
     \int^{t_{\text{on}}+dt^{+}_{\text{on}}}_{t_c + \Delta t_c^{+}} -(i_e + di_e^{+}) + m_1 (t - t_{\text{on}} - dt_{\text{on}}^{+}) + w(t)\, dt = V_{\text{th}}\tau_c. 
\end{align}
\begin{align}
    & \int_{t_c + \Delta t_c^{+}}^{t_{\text{on}}} -i_e + m_1(t - t_{\text{on}}) + w(t)\,dt + \int_{t_c}^{t_c + \Delta t_c^{+}}  -i_e + m_1(t - t_{\text{on}}) + w(t)\,dt \;=\; V_{\text{th}}\tau_c.
\end{align}
Because $-i_e + m_1(t - t_{\text{on}}) + w(t)$ is positive for all $t \ge t_c$, we have
\begin{align} \label{eqn:pos_perturb_inequ}
    &\int^{t_{\text{on}}+dt^{+}_{\text{on}}}_{t_c + \Delta t_c^{+}} -(i_e + di_e^{+}) + m_1 (t - t_{\text{on}} - dt_{\text{on}}^{+}) + w(t)\, dt\; \ge \;
    \int_{t_c + \Delta t_c^{+}}^{t_{\text{on}}} -i_e + m_1(t - t_{\text{on}}) + w(t)\,dt.
\end{align}
Equation (\ref{eqn:pos_perturb_inequ}) can be algebraically transformed as 
\begin{align}
\label{eqn:pos_perturb_inequ_2}
    & -(i_e + di_e^{+})(t_{\text{on}} + dt^{+}_{\text{on}} - t_c - \Delta t_c^{+}) + \int_{t_c + \Delta t_c^{+}}^{t_{\text{on}}+dt_{\text{on}}^{+}} w(t) dt 
    -\frac{m_1}{2}(t_c + \Delta t^{+}_c-t_{\text{on}} - d t_{\text{on}}^{+})^2 \nonumber \\
    & \ge -i_e(t_{\text{on}}-t_c-\Delta t_c^{+}) -\frac{m_1}{2}(t_c + \Delta t_c^{+}-t_{\text{on}})^2+\int_{t_c+\Delta t_c^{+}}^{t_{\text{on}}} w(t) dt.
\end{align}
Considering $di_e^{+}$ and $dt_{\text{on}}^{+}$ are infinitesimals,  (\ref{eqn:pos_perturb_inequ_2}) can be rearranged as 
\begin{align} \label{eqn:pos_perturb_inequ_3}
    & - i_e dt^{+}_{\text{on}} 
    - d i^{+}_{e}(t_{\text{on}} - t_c - \Delta t_c^{+}) + d t_{\text{on}}^{+}w(t_{\text{on}}) 
     + m_1 (t_c + \Delta t_c^{+}-t_{\text{on}}) \ge 0.
\end{align}
We substitute (\ref{eqn:pos_perturb_inequ_4}) into (\ref{eqn:pos_perturb_inequ_3}). In this way, (\ref{eqn:pos_perturb_inequ_3}) is simplified into (\ref{eqn:pos_perturb_inequ_5})
\begin{align} \label{eqn:pos_perturb_inequ_4}
    -(i_e + di_e^{+}) + m_1(t_c + \Delta^{+}t_c - t_{\text{on}}) +  w(t_c + \Delta t_c^{+}) \;=\; 0,
\end{align}
\begin{align} \label{eqn:pos_perturb_inequ_5}
    (t_{\text{on}} - t_c - \Delta t_c^{+}) d i_e^{+} \;\le\; \left( w(t_{\text{on}}) - w(t_c + \Delta t^{+}_{c})\right) d t_{\text{on}}^{+}.
\end{align}
We observe that $ (t_{\text{on}} - t_c - \Delta t_c^{+})$, and $d i_e^{+}$, $d t_{\text{on}}^{+}$ are all positive. Therefore,
the right derivative of the function from $t_{\text{on}}$ to $i_e$ is bounded by
\begin{align} \label{eqn:right_derv_bd}
    0 \;\le\; \frac{d i_e^{+}}{d t_{\text{on}}^{+}} 
    \;\le\; \frac{w(t_{\text{on}}) - w(t_c + \Delta t_c^{+})}{t_{\text{on}}-(t_c + \Delta t^{+}_{c})} \;\le\; \frac{2A_{ub}}{t_{od}^{\text{min}}},
\end{align}
where $t_{od}^{\text{min}}$ is the minimum comparator overdrive delay.

\vspace{+6pt}
\noindent(2)\quad We perturb $i_e$ by $d i^{-}_e$ which is a negative infinitesimal. A similar inequality is obtained as
\begin{align}
     (t_{\text{on}} - t_c - \Delta t_c^{-}) d i_e^{-} \le \left( w(t_{\text{on}}) - w(t_c + \Delta t^{-}_{c})\right) d t_{\text{on}}^{-}.
\end{align}
The left derivative of the function from $t_{\text{on}}$ to $i_e$ is bounded by
\begin{align} \label{eqn:left_derv_bd}
    \frac{d i_e^{-}}{d t_{\text{on}}^{-}} 
     \ge \frac{w(t_{\text{on}}) - w(t_c + \Delta t_c^{-})}{t_{\text{on}}-(t_c + \Delta t^{-}_{c})} \ge -\frac{2A_{ub}}{t_{od}^{\text{min}}}.
\end{align}
\vspace{+6pt}
\noindent(3)\quad Therefore, $\psi$ is sector bounded by 
\begin{align}
  \left( \psi(\tilde{t}_{\text{{on}}}) + \frac{A_{ub}}{t_{od}^{\text{min}}}\tilde{t}_{\text{{on}}} 
  \right) \tilde{t}_{\text{{on}}} > 0.
\end{align}
\subsection{Bounds of the Comparator Overdrive Delay}
From the circle criterion,
a sufficient and necessary condition for the stability of current control loop is
\begin{align} \label{eqn:abs_stab_m1}
    \text{min} \Bigg \{ \frac{d i_e^{+}}{d t_{\text{on}}^{+}}, \frac{d i_e^{-}}{d t_{\text{on}}^{-}} \Bigg \}  > -\frac{m_1}{2}.
\end{align}
From (\ref{eqn:right_derv_bd}) 
\begin{align}
    \frac{d i_e^{+}}{d t_{\text{on}}^{+}} \ge 0.
\end{align}
From (\ref{eqn:left_derv_bd}) 
\begin{align}
\frac{d i_e^{-}}{d t_{\text{on}}^{-}} \ge \frac{w(t_{\text{on}})-w(t_{\text{c}})}{t_{\text{on}} - t_{c}}.
\end{align}
We can prove (\ref{eqn:abs_stab_m1}) if we can show the following:
\begin{align}
    \frac{w(t_{\text{on}})-w(t_{\text{c}})}{t_{\text{on}} - t_{c}} > -\frac{m_1}{2}.
\end{align}

We prove that $V_{\text{th}} \tau_c$ can guarantee the stability by using (\ref{eqn:abs_stab_m1}). The lower sector bound of the nonlinear function in the feedback path is 
\begin{align}
    & \frac{w(t_{\text{on}}) - w(t_c)}{t_{\text{on}} - t_c} \ge \, \frac{-A - w(t_c)}{t_{\text{on}} - t_c} \ge \frac{-A- w(t_c)}{t_l}=
    \label{eq:quo}
\end{align}
\begin{align}
    & \frac{-A -w(t_c)}{\frac{w(t_c)}{m_1} + \sqrt{\left(\frac{w(t_c)}{m_1}\right)^2+\frac{2}{m_1}\left(V_{\text{th}}\tau_c - B \right)}} =\\
    & \, - \frac{1+\frac{w(t_c)}{A}}{\frac{w(t_c)}{A} + \sqrt{\left(\frac{w(t_c)}{A}\right)^2 + \frac{2m_1}{A^2}\left(V_{\text{th}}\tau_c - B
    \right)}}m_1 \ge \label{eqn:stabproofbeforemid}\\
    & \, - \frac{1+\frac{w(t_c)}{A}}{\frac{w(t_c)}{A} + \sqrt{\left(\frac{w(t_c)}{A}\right)^2 + 8}}m_1 \ge  \label{eqn:stabproofbeforeend}\\
    & \, -\frac{m_1}{2}. \label{eqn:stabproofend}
\end{align}
The derivation from (\ref{eqn:stabproofbeforeend}) to (\ref{eqn:stabproofend}) is because of the following auxiliary function. In our case, we demonstrate the monotonicity of $y(x)$ by defining $x = \frac{w(t_c)}{A}$ restricting it to $x \in [-1,1]$ and $\mu>0$.
\begin{align}
    y \triangleq \frac{-1-x}{x+\sqrt{x^2+\mu}}.
\end{align}
Differentiate $y(x)$ and simplify:
\begin{align}
y'(x)
&= \frac{x - \sqrt{x^2+\mu} + 1}{\,\mu + x^2 + x\sqrt{x^2+\mu}\,}.
\label{eq:ydash}
\end{align}
The denominator is strictly positive for all $x\in\mathbb{R}$:
\begin{align}
\mu + x^2 + x\sqrt{x^2+\mu}
= \sqrt{x^2+\mu}\,\bigl(\sqrt{x^2+\mu}+x\bigr) \;>\; 0.
\label{eq:denpos}
\end{align}
Hence the sign of $y'(x)$ equals the sign of the numerator
\begin{equation}
N(x)\;=\;x-\sqrt{x^2+\mu}+1.
\end{equation}
To make $y$ monotone \emph{decreasing} on $[-1,1]$, it suffices to require $N(x)\le 0$ for all $x\in[-1,1]$:
\begin{equation}
    x-\sqrt{x^2+\mu}+1 \;\le\; 0
\;\;\Longleftrightarrow\;\;
\sqrt{x^2+\mu} \;\ge\; x+1.
\end{equation}
On $[-1,1]$ we have $x+1\in[0,2]$, so both sides are nonnegative and we may square:
\begin{equation}
x^2+\mu \;\ge\; (x+1)^2 \;=\; x^2+2x+1
\;\;\Longleftrightarrow\;\;
\mu \;\ge\; 2x+1.
\end{equation}
Thus, the condition must hold for the worst (largest) right-hand side over $x\in[-1,1]$:
\begin{equation}
\mu \;\ge\; \max_{x\in[-1,1]}(2x+1) \;=\; 2\cdot 1 + 1 \;=\; 3.
\end{equation}
Therefore, \eqref{eqn:stabproofbeforeend} is monotonically decreasing on the domain $x \in [-1,1]$ provided
\begin{equation}
    \mu \;\ge\; 3,
    \qquad
    \mu \;\triangleq\; \frac{2m_1}{A^2}\,\bigl(V_{\text{th}}\tau_c - B\bigr).
    \label{eq:mu_def_mono}
\end{equation}
Equivalently, 
\begin{equation}
    u \;\ge\; \frac{3A^2}{2m_1}
    \quad\Longleftrightarrow\quad
    V_{\text{th}}\tau_c \;\ge\; \frac{3A^2}{2m_1} \,+\,B.
    \label{eq:Vth_tau_threshold}
\end{equation}
Furthermore, for $\mu = 8$, we obtain $y_{\text{min}} = y\big|_{x = 1} = -0.5$, 
which guarantees the stability of the system. Hence, the stability condition can be expressed as
\begin{equation}
    V_{\text{th}}\,\tau_{c} \;\ge\; \frac{4A^{2}}{m_{1}} \;+\; B.
    \label{eq:cond}
\end{equation}
\end{proof}
\begin{theorem}
Consider a constant off-time current control loop with the overdrive delay. The input is a ramp signal with slope $m_1 (> 0)$ and an interference function $w(t)$.  
Given an interference amplitude upper bound $A$ and an integral term $B$, the linearized feedback gain $\psi$ is bounded within the range $[\psi_{\min},\,\psi_{\max}]$, where
\begin{equation}
\psi_{\min}
= \frac{-2 m_1}{
1 + \sqrt{1 + \frac{2m_1}{A^{2}}
\left(V_{\text{th}}\tau_c - B\right)}
},
\qquad
\psi_{\max}
= \frac{2 m_1}{
-1 + \sqrt{1 + \frac{2m_1}{A^{2}}
\left(V_{\text{th}}\tau_c - B\right)}
}.
\end{equation}

\end{theorem}

\begin{proof}
We define
\begin{equation}
    \psi \triangleq \frac{w(t_{\text{on}}) - w(t_c)}{t_{\text{on}} - t_c}.
\end{equation}
From~\eqref{eqn:stabproofbeforemid}, $\psi$ is bounded as
\begin{equation}
\psi \in 
\Bigg[
- \frac{\bigl(1+\frac{w(t_c)}{A}\bigr)m_1}
{\frac{w(t_c)}{A} + \sqrt{\left(\frac{w(t_c)}{A}\right)^2 
+ \frac{2m_1}{A^2}\!\left(V_{\text{th}}\tau_c - B\right)}},\;
- \frac{\bigl(1+\frac{w(t_c)}{A}\bigr)m_1}
{\frac{w(t_c)}{A} + \sqrt{\left(\frac{w(t_c)}{A}\right)^2 
+ \frac{2m_1}{A^2}\!\left(V_{\text{th}}\tau_c + B\right)}}
\Bigg].
\end{equation}

Given that the magnitude of the interference function $w(t)$ is bounded by $A$, we have
\begin{equation}
    |w(t)| \le A, \qquad -A \le w(t) \le A .
\end{equation}
Denoting the minimum and maximum values of $\psi$ by $\psi_{\min}$ and $\psi_{\max}$, and noting that the extrema\footnote{The denominator is strictly positive, as stated in Theorem~3.} occur at $w(t_c)=\pm A$, it follows that
\begin{equation}
\psi_{\min}
= \min\!\left[
\frac{-2m_1}{1 + \sqrt{1 + \frac{2m_1}{A^2}\!\left(V_{\text{th}}\tau_c - B\right)}},
\; 0
\right]
= \frac{-2m_1}{1 + \sqrt{1 + \frac{2m_1}{A^2}\!\left(V_{\text{th}}\tau_c - B\right)}},
\end{equation}
and
\begin{equation}
\psi_{\max}
= \max\!\left[
0,\;
\frac{2m_1}{1 + \sqrt{1 + \frac{2m_1}{A^2}\!\left(V_{\text{th}}\tau_c - B\right)}}
\right]
= \frac{2m_1}{-1 + \sqrt{1 + \frac{2m_1}{A^2}\!\left(V_{\text{th}}\tau_c - B\right)}} .
\end{equation}

\end{proof}

\section{Analytical Approximation and Numerical Evaluation of the Integral Term $B$}
We first approximate the interference effect on the comparator overdrive delay
by assuming a single-tone interference model
$w(t) = A\sin(\omega t + \phi)$.
This analytical form allows us to evaluate the integral term explicitly
and approximate the resulting $B$ defined in~\eqref{eqn:boundf}.
Building upon this simplified case, we then extend the analysis to the
realistic scenario where $w(t)$ does not admit a closed-form analytical
expression. In this general case, we propose a tighter frequency-domain
bound on $B$ that remains valid for arbitrary interference waveforms.

\subsection{Analytical Approximation}
The integral in~\eqref{eqn:w_vth_g_identity} can be expressed analytically as
\begin{equation}
\label{eq:f_tphi}
\frac{m_1}{2}t^2 - \chi(t,\phi)
= V_{\text{th}}\tau_c,
\end{equation}
where the interference-related integration term $\chi(t,\phi)$ is defined as
\begin{equation}
\label{eq:second_term}
\chi(t,\phi)
= -\frac{A}{\omega}\big[\cos(\omega t+\phi)-\cos(\phi)\big].
\end{equation}
To determine the worst-case contribution, we seek the maximum magnitude of $\chi(t,\phi)$ over all possible phases $\phi$:
\begin{equation}
\label{eq:T2max_def}
|\chi|_{\max}(t)
= \max_{\phi}
\left|
\frac{A}{\omega}
\big[\cos(\omega t+\phi)-\cos(\phi)\big]
\right|.
\end{equation}
Using the trigonometric identity
\[
\cos a - \cos b = -2\sin\!\Bigl(\tfrac{a+b}{2}\Bigr)\sin\!\Bigl(\tfrac{a-b}{2}\Bigr),
\]
we obtain
\begin{equation}
\cos(\omega t+\phi) - \cos(\phi)
= -2\sin\!\Bigl(\phi+\tfrac{\omega t}{2}\Bigr)
   \sin\!\Bigl(\tfrac{\omega t}{2}\Bigr).
\end{equation}
For a fixed $t$, $\sin(\omega t/2)$ is constant, and $\sin(\phi+\omega t/2)$
can vary between $-1$ and $1$. Therefore, the worst-case magnitude of $\chi(t,\phi)$ is
\begin{equation}
\label{eq:T2max_final}
|\chi|_{\max}(t)
= \frac{2A}{\omega}
  \big|\sin\!\bigl(\tfrac{\omega t}{2}\bigr)\big|.
\end{equation}
For small integration durations such that $\omega t \ll 1$, the sine function can be linearized as $\sin(\tfrac{\omega t}{2}) \approx \tfrac{\omega t}{2}$. Substituting this approximation yields
\begin{equation}
\label{eq:chi_max_smallangle}
|\chi|_{\max}(t)
\approx
A t.
\end{equation}
Substituting the worst-case interference amplitude into~\eqref{eq:f_tphi} gives
\begin{equation}
\label{eq:f_with_chimax}
\frac{m_1}{2}t^2
- \frac{2A}{\omega}\sin\!\bigl(\tfrac{\omega t}{2}\bigr)
= V_{\text{th}}\tau_c.
\end{equation}
Applying the same small-angle approximation $\sin(\tfrac{\omega t}{2}) \approx \tfrac{\omega t}{2}$ leads to
\begin{equation}
\label{eq:t_quadratic}
\frac{m_1}{2}t^2 - A t = V_{\text{th}}\tau_c.
\end{equation}
Solving~\eqref{eq:t_quadratic} for $t$ gives
\begin{equation}
\label{eq:t_solution}
t =
\frac{
A + \sqrt{A^2 + 2m_1 V_{\text{th}}\tau_c}
}{
m_1
}.
\end{equation}
The corresponding $B$ value can be expressed as
\begin{equation}
\label{eq:B_expression}
B \approx
A \cdot
\frac{
A + \sqrt{A^2 + 2m_1 V_{\text{th}}\tau_c}
}{
m_1
}.
\end{equation}
Note that this expression does not represent an absolute bound as in~\eqref{eqn:boundf}
Due to the small-angle approximation, however, it typically yields a value much smaller
than the conservative bound proposed in~\eqref{eqn:boundf}.

\subsection{Exact Numerical Formulation for $B$}
When the interference signal $w(t)$ consists of multiple sinusoidal components with independent phases, it can be expressed as
\begin{equation}
w(t) = \sum_{k=1}^{K} A_k \sin(\omega_k t + \phi_k).
\label{eq:w_multi_sin}
\end{equation}
The integral of $w(t)$ over an interval $[0,t]$ is
\begin{equation}
\int_{0}^{t} w(\tau)\,d\tau
= \sum_{k=1}^{K} 
    \frac{2A_k}{\omega_k}
    \sin\!\Bigl(\tfrac{\omega_k t}{2}\Bigr)
    \sin\!\Bigl(\phi_k + \tfrac{\omega_k t}{2}\Bigr).
\label{eq:multi_sin_integral}
\end{equation}
Since each phase $\phi_k$ can vary independently, the maximum possible magnitude of the total integral occurs when all sinusoidal terms add constructively. The resulting worst-case bound is therefore
\begin{equation}
\label{eq:int_bound_independent}
\Biggl|\int_{0}^{t} w(\tau)\,d\tau\Biggr|
\;\le\;
\sum_{k=1}^{K}
    \frac{2A_k}{\omega_k}
    \biggl|\sin\!\Bigl(\tfrac{\omega_k t}{2}\Bigr)\biggr|.
\end{equation}
Hence, \( B(t) \) is introduced to indicate the variation of \( t \) between \( 0 \) and \( t_{\text{od}}^{\text{max}} \), and can be expressed as
\begin{equation}
\label{eq:B_tight_bound}
B(t)
=
\int_{-\infty}^{+\infty}
\biggl|
\frac{W(\omega)}{\omega\cdot\pi}
\biggr|
\cdot
\biggl|
\sin\!\Bigl(
\tfrac{
\omega t
}{2}
\Bigr)
\biggr|\,d\omega
\end{equation} 
This formulation introduces a time-dependent modulation term that captures the finite integration window of the comparator delay. The corresponding value of $B$ can be obtained by applying a numerical solver to the delay equation in \eqref{eqn:doubletd}.

\section{A Case Study on the Application of High-Speed Current-Mode Control in DC--DC Converters}
\begin{figure}[t]
\centering
\includegraphics[width=\textwidth]{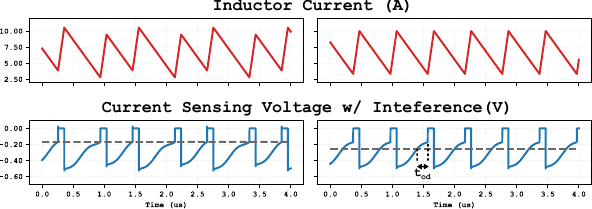}
\caption{Simulated waveforms of the inductor current and the current-sensing voltage under interference, obtained from a bottom current-sensing configuration. The left plot shows an ideal comparator with \(V_{\text{th}}\tau_c = 0\), exhibiting subharmonic inductor current oscillation and instability. The right plot shows a comparator with \(V_{\text{th}}\tau_c = 9.24~\text{V}\!\cdot\!\text{ns}\), where the instability is eliminated. The gray dashed line indicates the current reference, and the overdrive voltage corresponds to the region above this threshold.}
\end{figure}
Here, we present a practical case study for designing and selecting the threshold--time constant product \(V_{\text{th}}\tau_c\) of the comparator in a high-speed DC--DC converter~\cite{qian_analog_2025} to guarantee stability even under severe interference. The trigger integral is given by
\begin{equation}
\label{eq:int_condition_full}
\int_{0}^{t_{od}}
\!\!\Big(G_s m t + A\cos(\omega_1 t + \phi)\sin(\omega_2 t + \phi)\Big)\, dt
\;=\;
V_{\text{th}}\tau_c,
\end{equation}
where the term inside the integral represents the sensed current voltage. Moreover, \(A\) denotes the interference amplitude, \(\omega_1\) and \(\omega_2\) are the respective angular frequencies, and \(\phi\) is the phase angle. The interference is modeled as the product of cosine and sine components. The parameter \(G_s\) denotes the current-sensing gain, typically realized through a small sense resistor, and \(m\) represents the inductor current slope, either rising or falling, depending on the constant on-time or constant off-time configuration.
In this case study, let \(m\) denote the \emph{falling} slope of the inductor current, given by \(m = V_{\text{out}}/L\). With \(L = 150~\text{nH}\), \(V_{\text{out}} = 2~\text{V}\), and a sensing gain \(G_s = 50~\text{mV/A} = 0.05~\text{V/A}\), we obtain
\begin{equation}
m \;=\; \frac{V_{\text{out}}}{L}
= \frac{2}{150\times 10^{-9}}
= 1.333\times 10^{7}\ \text{A/s},\qquad
mG_s \;=\; 6.667\times 10^{5}\ \text{V/s}.
\end{equation}
The design must satisfy the following condition to ensure stability:
\begin{equation}
\label{eq:vth_tau_condition_num}
V_{\text{th}}\tau_{c}
\;\ge\;
\frac{4A^{2}}{m\,G_s}
\;+\;
B,
\qquad
 B
=
\int_{-\infty}^{+\infty}
\biggl|
\frac{W(\omega)}{\omega \cdot \pi}
\biggr|
\,d\omega.
\end{equation}
The maximum allowable overdrive delay can then be expressed as
\begin{equation}
t^{\text{max}}_{od} 
= 
\frac{A}{mG_s} 
+ 
\sqrt{
\left(\frac{A}{mG_s}\right)^2
+
\frac{2}{mG_s}
\left(V_{\text{th}}\tau_c + B \right)
}.
\end{equation}
For simplicity, let \(\omega_1 = \omega_2 = \omega\) in~\eqref{eq:int_condition_full}. Taking
\begin{equation}
A = 30~\text{mV} = 0.03~\text{V}, 
\qquad
\omega = 2\pi \times 1.125 \times 10^{6}~\text{rad/s},
\end{equation}
and solving~\eqref{eq:vth_tau_condition_num} yields the minimum feasible threshold--time constant product, along with the corresponding interference bound and overdrive delay:
\begin{equation}
V_{\text{th}}\tau_c = 9.24~\text{V}\!\cdot\!\text{ns},
\qquad
B = \frac{A}{\omega} = 4.24~\text{V}\!\cdot\!\text{ns},
\qquad
t_{\text{od}}
= 
\sqrt{\frac{2V_{\text{th}}\tau_c}{mG_s}}
= 166.5~\text{ns}.
\end{equation}
With \(V_{\text{th}}\tau_c = 0\), the inductor current exhibits subharmonic oscillation. By introducing a comparator that satisfies the above criterion, the instability is fully suppressed. The resulting measured delay, \(t_{od} = 175.5~\text{ns}\), is also smaller than the theoretical maximum \(t_{od}^{\text{max}}\) (280\,ns).

\section{Conclusion}
This note establishes a rigorous analytical model for the comparator overdrive delay, leading to robust high-frequency current-mode control. It proves the continuity condition for the static current mapping under sensor interference, derives closed-form sector bounds on the nonlinearity in the dynamic mapping to enable formal stability certification, and obtains tight analytical upper and lower bounds on the comparator overdrive delay. In conjunction with the analytical and numerical characterization of the interference term \(B\) and the accompanying case study, these results provide practical design guidelines to ensure the stability of current-mode converters operating under sensor interference.

\bibliographystyle{ieeetr}
\bibliography{main}

@misc{compoverdelayadi,
title = {{Parameters that Affect Comparator Propagation Delay Measurements | Analog Devices}},
url = {https://www.analog.com/en/technical-articles/parameters-that-affect-comparator-propagation-delay-measurements.html},
urldate = {2022-06-14}
}

@inproceedings{Cui2018a,
address = {Padova},
author = {Cui, Xiaofan and Avestruz, Al-Thaddeus},
booktitle = {2018 IEEE 19th Workshop on Control and Modeling for Power Electronics (COMPEL)},
doi = {10.1109/COMPEL.2018.8460055},
file = {:C\:/Users/richa/AppData/Local/Mendeley Ltd./Mendeley Desktop/Downloaded/Cui, Avestruz - 2018 - A New Framework for Cycle-by-Cycle Digital Control of Megahertz-Range Variable Frequency Buck Converters(2).pdf:pdf},
isbn = {VO -},
keywords = {5S,Aerospace electronics,Buck converters,Capacitors,DC-DC power convertors,Digital control,Switches,Voltage control,constant-on-time,current-mode,current-mode buck converter,current-mode circuits,cycle-by-cycle digital control,digital control,dynamic voltage scaling,electric current control,event-driven control,megahertz-range variable frequency buck converters,sampled-data model,state-space methods,switching convertors,switching- synchronized sampled-state space,switching-synchronized,switching-synchronized sampled-state space,variable frequency power converter,variable-frequency power converters},
pages = {1--8},
title = {{A New Framework for Cycle-by-Cycle Digital Control of Megahertz-Range Variable Frequency Buck Converters}},
year = {2018}
}

@article{Fernandes2016,
abstract = {This work presents a novel mixed-signal control scheme for a boost power factor correction (PFC) rectifier. The digital controller modulates the inductor peak current to produce a low-distortion ac line current in discontinuous conduction mode (DCM) and continuous conduction mode (CCM), without the need for average current sensing. A lookup table (LUT) optimizes efficiency at low input currents, by allowing operation at 125-500-kHz DCM based on calculated thresholds. At high input currents, the converter operates at 1-MHz CCM for reduced inductor footprint. An analog off-Time generator with a digital frequency locked loop facilitates CCM operation, eliminating the need for slope compensation in the current loop and reduces frequency variations. The LUT is programmed with an adaptive output voltage of 250/450 V for low/high mains line voltage (85-265 Vrms) to optimize efficiency over a broad range of conditions. The 150-W PFC prototype operates up to 1 MHz with a peak efficiency of 95% and a total harmonic distortion of 5%.},
author = {Fernandes, Ryan and Trescases, Olivier},
doi = {10.1109/TPEL.2015.2499194},
issn = {08858993},
journal = {IEEE Transactions on Power Electronics},
keywords = {ac-dc power conversion,current control,digital control,digital modulation,frequency locked loops,switched mode power supplies},
month = {aug},
number = {8},
pages = {5694--5708},
publisher = {Institute of Electrical and Electronics Engineers Inc.},
title = {{A Multimode 1-MHz PFC Front End with Digital Peak Current Modulation}},
volume = {31},
year = {2016}
}

@article{Theory,
author = {Theory, Operational Amplifiers},
file = {:C\:/Users/richa/AppData/Local/Mendeley Ltd./Mendeley Desktop/Downloaded/Theory - Unknown - MITRES_6-010S13_comchaptrs.pdf:pdf},
title = {{MITRES_6-010S13_comchaptrs}}
}

@inproceedings{qian_analog_2025,
	address = {Anaheim, CA, USA},
	title = {Analog {Comparator} {Modeling} and {Selection} for {High}-{Speed} {Constant} {On}-{Time} {Controlled} {DC}-{DC} {Converters}},
	copyright = {https://doi.org/10.15223/policy-029},
	isbn = {979-8-3315-2214-8},
	url = {https://ieeexplore.ieee.org/document/11097936/},
	doi = {10.1109/ITEC63604.2025.11097936},
	abstract = {The Constant-On-Time (COT) control architecture is widely used in modern voltage regulator modules (VRMs), especially in computational processing applications, due to its high efﬁciency under light-load conditions and its fast transient response. Popular variants such as current-mode (CM-COT) and ripple-based (RB-COT) approaches offer improved transient responses to load or reference steps. However, these methods depend on ripple sensing and comparison logic for power modulation, and these signals are susceptible to interference, which can potentially cause subharmonic oscillations. While solutions like RC ﬁltering and slope compensation are feasible, they require additional components and design considerations. This study proposes an analog approach to modeling and analyzing comparator delay, which inherently enhances the system’s resilience against interference, thereby improving control loop stability. Both simulations and experimental results validate the effectiveness of the proposed comparator delay model in addressing stability issues in CM-COT and RB-COT controlled power converters.},
	language = {en},
	urldate = {2025-08-11},
	booktitle = {2025 {IEEE}/{AIAA} {Transportation} {Electrification} {Conference} and {Electric} {Aircraft} {Technologies} {Symposium} ({ITEC}+{EATS})},
	publisher = {IEEE},
	author = {Qian, Guanyu and Cui, Xiaofan},
	month = jun,
	year = {2025},
	pages = {1--8},
	file = {PDF:C\:\\Users\\QGY51\\Zotero\\storage\\MSTZLKNC\\Qian and Cui - 2025 - Analog Comparator Modeling and Selection for High-Speed Constant On-Time Controlled DC-DC Converters.pdf:application/pdf},
}
\end{document}